\newtheorem{theorem}{Theorem}[section]
\DeclareMathAlphabet{\mathbbold}{U}{bbold}{m}{n}
\def\BibTeX{{\rm B\kern-.05em{\sc i\kern-.025em b}\kern-.08em
    T\kern-.1667em\lower.7ex\hbox{E}\kern-.125emX}}
\begin{document}
\title{Finite Horizon Multi-Agent Reinforcement Learning in Solving Optimal Control of State-Dependent Switched Systems\\
}
\author{Mi Zhou, Jiazhi Li, Masood Mortazavi, Ning Yan, and Chaouki Abdallah
  \thanks{Chaouki Abdallah is with the School of Electrical and Computer Engineering, Georgia Institute of Technology.
  Mi Zhou, Jiazhi Li, Masood Mortazavi, and Ning Yan are with Futurewei Technologies, San Jose, CA 30332.
  Email: {\tt\small mzhou2@futurewei.com, 
  jli5@futurewei.com,
  yan.ningyan@futurewei.com, masoodmortazavi@gmail.com, ccabdallah@gatech.edu}.}}
\maketitle

\begin{abstract}
In this article, a \underline{S}tate-dependent \underline{M}ulti-\underline{A}gent \underline{D}eep \underline{D}eterministic \underline{P}olicy \underline{G}radient (\textbf{SMADDPG}) method is proposed in order to learn an optimal control policy for regionally switched systems.
We observe good performance of this method and explain it in a rigorous mathematical language using some simplifying assumptions in order to motivate the ideas and to apply them to some canonical examples.
Using reinforcement learning, the performance of the switched learning-based multi-agent method is compared with the vanilla DDPG in two customized demonstrative environments with one and two-dimensional state spaces.
\end{abstract}

\begin{IEEEkeywords}
Reinforcement learning, state-dependent switched system, optimal control, Hamilton-Jacobi-Bellman equation, DDPG, approximation theory
\end{IEEEkeywords}

\section{Introduction}
Optimal control for state-dependent switched systems has been a hot topic in the optimal control community for a long time.
Applications of such systems can be found in fermentation processes \cite{Liu2014OptimalCO}, temperature control \cite{tc}, aerospace systems \cite{multiphase}, robots \cite{MITcheetah,Magnusrobots}, as well as natural systems \cite{snelllaw}.
There are different names for this problem, such as optimal control of differential systems with discontinuous right-hand side \cite{discrightside}, optimal control for piecewise smooth systems \cite{aeron}, optimal control for systems with isolated equality constraint \cite{optimalcontrolbook}, optimal control for systems with linear complementarity constraints, and optimal control for hybrid switched systems \cite{Azhmyakov}.
Despite the name, they all share the same properties.
\cite{discrightside} introduced several common state-dependent switched systems, such as a YO-YO model $\Ddot{x}+(1+u) \mathrm{sgn}(x) = 0$, a particle moving on a V-shaped configuration with model $\Ddot{x}+F\sin \alpha \cos\alpha \mathrm{sgn}x = 0$, a controlled system with Coulomb friction $\Ddot{x}+\mathrm{sgn}\dot x +x = u$, and an electrical relay system.
\cite{aeron} presented a Spring Loaded Inverted Pendulum (SLIP) model with massless legs for which, at different regions, the dynamics are different due to the position of the foot (single support and double support with legs will cause different dynamics of the system).
The theoretical foundation for this problem (hybrid minimum principle) has been well-built in \cite{Azhmyakov,ali2018,optimalcontrolbook,Shaikh, GELACC_Mi}.
The study of co-state was seen in \cite{costateDiscontinuous,economicmeaning,GELACC_Mi}.

However, if hybrid systems have regional dynamics or partitioned state space, when to switch and choosing which interface to switch make solving this problem a mixed integer nonlinear programming (MINLP) problem, for which, computational complexity is high and a globally optimal solution is hard to be reached.
Therefore, existing works focus on searching for a feasible and suboptimal solution.
Methods include graph search \cite{BBMPC}, nonlinear programming \cite{RUNGGERnonlinearProg}, dynamic programming \cite{RUNGGER2011}, model predictive control \cite{aeron}, and gradient decent \cite{multiregion}.
For example, the study of optimal control of multi-region state-dependent switched systems appeared in \cite{BBMPC} where the authors used branch-and-bound based model predictive control.
This combination of branch-and-bound and model predictive control still has a high computing demand.
\cite{RUNGGERnonlinearProg} proposed a numerical method by dynamic programming for solving this problem based on nonlinear programming by a full discretization of the state and input spaces.
In \cite{HBellman}, a hybrid Bellman Equation for systems with regional dynamics was proposed.
In this article, the switching interface was discretized and then dynamic programming was used at the high level to decide which region to switch to and which two states to be chosen as the boundary condition of a low-level optimal control problem.
Thus the problem is transformed into a dynamic programming problem between any two tuples $(t_s^j, \epsilon_s^j)$ and  $(t_s^i, \epsilon_s^i)$ where $\epsilon_s$ is the optimal switching state and $t_s$ is the optimal switching time.
The time complexity and space complexity of this algorithm are humongous considering the dynamic programming technique and the discretization. 
The accuracy of the final results also depended on the discretization precision.

With so many numerical methods for optimal control problems, none of them is a very general and convenient algorithm for state-dependent switched systems. 
As two important theoretical tools in optimal control communities, the Pontryagin minimum (maximum) principle and dynamic programming both have their own defects.
Pontryagin minimum principle requires that the system dynamics and stage cost be continuously differentiable which is not the case for the state-dependent switched systems.
Even with the development of the hybrid minimum principle, a flexible and user-friendly toolbox for such a problem is still not available.
And it is very easy for this method to get stuck into a local optimal.
The dynamic programming method is able to find a global optimal solution.
However, the Hamilton-Jacobi-Bellman equation is a partial differential equation that suffers from the curse of dimensionality using the traditional finite difference methods or level set methods.
Thus in this article, we borrow the idea of reinforcement learning and use neural networks to find an optimal solution.

Recently, there have been substantial works using reinforcement learning methods (RL) for switched systems.
Reinforcement learning algorithms such as deep deterministic policy gradient (DDPG), soft actor critic algorithms are widely studied from both theory and applications.
Most of them consider either the infinite horizon reinforcement learning problem \cite{Warren} or time-dependent switched systems \cite{RL-TD}.
In \cite{jingangZhao}, the authors considered a finite horizon reinforcement learning problem.
However, the system must be continuous and should have a control-affine form.

In this article, we first investigate finite horizon reinforcement learning for state-dependent switched systems.
Some theoretical foundations of finite horizon RL and state-dependent switched systems are given.
To solve the bias issue of gradient estimation, we then propose a state-dependent multiple-agent DDPG framework, named SMADDPG.
Two examples are provided to verify the efficiency of the proposed algorithm by comparing it with the vanilla DDPG algorithm.

This article is structured as follows: In Section \ref{sec:problem}, we introduce our problem as a hybrid optimal control problem.
In Section \ref{sec:error}, we present a finite horizon optimal control problem from the perspective of the Hamilton-Jacobi-Bellman equation and give an error analysis for approximate dynamic programming.
In Section \ref{sec:DDPG}, we introduce the framework of the SMADDPG algorithm.
Furthermore, we provide two multi-region state-dependent switched systems to verify our algorithm and compare it with the DDP in Section \ref{sec:simulation}.
Finally, Section \ref{sec:conclusion} concludes this article and presents some future works.
\section{Problem formulated}\label{sec:problem}
A general state-dependent switched system is defined as follows:
\begin{align}\label{eqn:realsys}
\dot x(t) = f_{q(t)}(x(t),u(t)),  \quad (x(t),u(t)) \in \mathcal{R}_{i}
\end{align}
where $x \in \mathbb{R}^{n_x}$ denotes the state, $u\in \mathbb{R}^{n_u}$ denotes the continuous control input, $q(t)$ is the index of sub-systems with which we can define a sequence of switching $q(t) = \{i|i\in (1, 2,..., m)\}$, and $f_{i}(x(t),u(t),t)$ is a continuously differentiable function in the region $\mathcal{R}_i$.
Then the optimal control problem is to find the optimal $u$ and switching time instant $\tau_i$ to minimize the following objective function:
\begin{align}\label{eqn:realJ}
J = \psi(x(t_f))+\sum_{i=1}^m\int_{\tau_i}^{\tau_{i+1}}L_{q(t)}(x(t),u(t)) {\rm d}t,
\end{align}
where stage cost $L_{i}(x,u)$ is also piecewise continuously differentiable functions.
$x_f$ denotes the terminal state and $\psi(x(t_f))$ is the terminal cost.

We make the following assumptions:
\begin{enumerate}
    \item There is no jump behavior of the state, i.e.,
\begin{align}
x(t)\in \mathcal{AC}([0,t_f ]; \mathbb{R}^{n_x})
\end{align}
where $\mathcal{AC}$ denotes absolute continuous function.
\item At each region, the system is controllable.
\item No Zeno behavior exists at the switching interface.
\item The optimal controller has bounded variations.
\end{enumerate}

Solving this problem is of significant difficulty especially when there are multiple regions.
What's worse, the nonlinear switching interface will make this problem harder to solve.
For example, in Fig. \ref{fig:scenario}, there are four regions and each region has different system dynamics.
Assuming an agent is going from $x_0$ at Region 1 to a target state $x_f$
at Region 3, the system can go through the trajectory $1\rightarrow 2 \rightarrow 3$.
It can also go from the trajectory of $1\rightarrow4\rightarrow3$.
It may also travel through $1\rightarrow 2 \rightarrow 1 \cdots \rightarrow 3$.
This model implies the combinatorial explosion of hybrid systems.
Without prior knowledge of how many switches will happen and what region it will go through, solving this problem is an NP-hard problem.
\begin{figure}[!htp]
    \centering
\includegraphics[width=\linewidth]{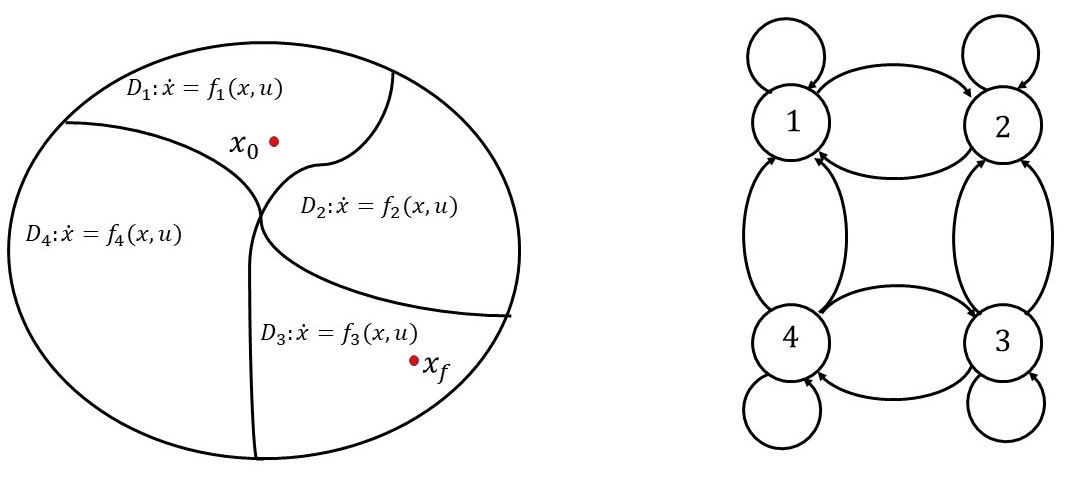}
    \caption{Multiple region dynamical systems: (a) systems illustration; (b) system transition graph.}
    \label{fig:scenario}
\end{figure}

Thus, in this article, we use the reinforcement learning method to solve this problem, which is a sampling-based method and does not rely on prior knowledge of the system model.
This method combines the high-level graph search and low-level optimal control problem into one step to construct the value function in the state space.

Denote the $\mathcal{N}_i$ as the neighborhood region of region $i$.
The formulated problem \eqref{eqn:realsys} and \eqref{eqn:realJ} can be transformed into the following problem:
\begin{align}
\dot x(t) =\sum_{i=1}^m \mathbbold{1}_{x(t)\in \mathcal{R}_i} f_i(x,u)
\end{align}
where $\mathbbold{1}(\cdot)$ denotes the characteristic function.
Similarly, the stage cost is written as
\begin{align}
L= \sum_{i=1}^m \mathbbold{1}_{x(t)\in \mathcal{R}_i} L_i(x,u)
\end{align}

By writing the original problem this way, we can treat the new system as a highly nonlinear system.

\section{Theoretical error analysis} \label{sec:error}
\subsection{Hamilton-Jacobi-Bellman equation for hybrid system}
Define the following value function 
\begin{align*}
V(x(t_f),t)=\min_{u(t)} \int_t^{t_f}L(x(t),u(t))dt +\psi(x(t_f)
\end{align*}
where $V(x(t),t_f)=\psi(x(t_f))$.
Then the Hamilton-Jacobi-Bellman equation satisfies
\begin{align} \label{eqn:HJB}
\nonumber&V(x(t),t) = \\ 
&\min_u \{V(x(t+dt),t+dt)+\int_{t}^{t+dt}L(x(s),u(s))\rm{d}s \}.
\end{align}
Supposing the cost-to-go function is continuously differentiable in $x$ and $t$, we can then do a Taylor expansion of the first term on the right-hand side,
\begin{align} \label{eqn:HJBDDPG}
   \nonumber &V(x(t+dt),t+dt)\\
  \nonumber  &=V(x(t),t)+\frac{\partial V(x,t)}{\partial t}\rm{d}t+\frac{\partial V(x,t)}{\partial x} \cdot \dot {x}(t)\rm{d}t+o(dt) \\
 \nonumber   &=V(x(t),t)+\frac{\partial V(x,t)}{\partial t}\cdot 1\rm{d}t+\frac{\partial V(x,t)}{\partial x} \cdot \dot {x}(t)\rm{d}t+o(dt)\\
 &=V(x(t),t)+\left [\frac{\partial V(x,t)}{\partial x},\frac{\partial V(x,t)}{\partial t} \right 
    ] \cdot \begin{bmatrix}\dot {x}(t)] \\ 1\end{bmatrix}\rm{d}t+o(dt)
\end{align}
where $o(dt)$ denotes terms of higher order than  $dt$.

However, in optimal control for
hybrid systems, we have to account for a non-differentiable value function at some switching states in the switching interface.
Furthermore, the optimal control input $u$ may be discontinuous or non-differentiable at the switching interface as well.
\begin{theorem} \label{thm: hmp}
The value function for above defined optimal control problem for a hybrid system is continuous but may not be differentiable at the switching interface with switching state $x(\tau)$ and switching time instant $\tau$.
\begin{align*}
V(x(\tau_-),\tau_-)=V(x(\tau_+),\tau_+).
\end{align*}
\end{theorem}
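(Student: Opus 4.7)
The plan is to treat the theorem as two separate claims --- continuity of $V$ across the switching interface (the displayed equality) and non-differentiability there --- and prove each independently, using the dynamic-programming principle (DPP) together with the structural assumptions already in place.

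For the continuity claim, I would invoke the DPP in the form
\[
V(x(\tau_-),\tau_-)=\inf_{u|_{[\tau_-,\tau_+]}}\int_{\tau_-}^{\tau_+}L(x(t),u(t))\,dt+V(x(\tau_+),\tau_+),
\]
valid for any $\tau_-<\tau_+$ bracketing the switching instant $\tau$. Assumption 1 (absolute continuity of $x$) forces $x(\tau_-)\to x(\tau_+)$ as the two times collapse onto $\tau$; Assumption 4 (bounded-variation controls) keeps $u$ essentially bounded near $\tau$; and the piecewise-$C^1$ regularity of $L$ bounds the integrand uniformly on a neighborhood of the interface. Hence the integral vanishes in the limit, which delivers $V(x(\tau_-),\tau_-)=V(x(\tau_+),\tau_+)$ directly. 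Controllability in each region (Assumption 2) then supplies the standard argument that $V$ is continuous in $x$ on each side up to the interface, completing the continuity part.

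For non-differentiability I would argue by contradiction using the HJB equation \eqref{eqn:HJB}. Suppose $V$ is differentiable at an interface point $(x(\tau),\tau)$ between adjacent regions $i$ and $j$. Then $\partial V/\partial x$ and $\partial V/\partial t$ are single-valued there, and taking one-sided limits of the HJB from inside each region would force
\[
L_i(x,u_i^\ast)+\frac{\partial V}{\partial x}\,f_i(x,u_i^\ast)=L_j(x,u_j^\ast)+\frac{\partial V}{\partial x}\,f_j(x,u_j^\ast)=-\frac{\partial V}{\partial t}
\]
at the respective pointwise minimizers $u_i^\ast,u_j^\ast$. Since the pairs $(f_i,L_i)$ and $(f_j,L_j)$ are genuinely distinct across the interface, this coupled identity cannot be sustained on an open portion of the switching manifold without contradicting the costate-jump relation supplied by the hybrid minimum principle (\cite{Azhmyakov,Shaikh,GELACC_Mi}), in which $\partial V/\partial x$, viewed as the optimal costate along an optimal trajectory, generically exhibits a jump transverse to the interface. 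Hence $V$ must have a kink at $(x(\tau),\tau)$.

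The hard part will be making ``genuinely distinct'' precise and excluding pathological alignments in which $f_i-f_j$ happens to be tangent to the level sets of $V$ at every point of the interface, since in such degenerate cases the two HJB identities could in principle coexist without a kink. I expect to sidestep the issue by appealing directly to the costate-jump condition from the hybrid minimum principle as an established fact, and by observing that the exceptional configurations in which no kink appears correspond to ``ghost'' switches that are behaviorally indistinguishable from no switch at all; a genuine discontinuity of $f$ across the interface (implicit in calling it a switching interface) rules them out.
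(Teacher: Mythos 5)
Your continuity argument is essentially the paper's own proof: the paper likewise writes the dynamic-programming identity
\[
V(x(\tau_-),\tau_-)=V(x(\tau_+),\tau_+)+\int_{\tau_-}^{\tau_+}L(x,u)\,\mathrm{d}t
\]
and lets the integral vanish as $\tau_-\to\tau_+$ because the integrand is bounded near the interface. You are somewhat more careful than the paper in saying \emph{why} the integrand is bounded (absolute continuity of $x$, bounded-variation controls, piecewise regularity of $L$), which is a genuine improvement, since the paper simply asserts that $\tilde L(x,u)$ is ``a bounded constant value.''

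The non-differentiability half is where you part ways with the paper --- and where the gap lies. The paper's proof does not address non-differentiability at all; it only establishes the displayed equality and lets the two worked examples serve as evidence for the kink. Your contradiction argument via the HJB equation and the costate-jump condition does not close as sketched, for two reasons. First, identifying $\partial V/\partial x$ with the costate along the optimal trajectory presupposes exactly the differentiability you are trying to refute, so invoking a ``costate jump'' to produce a kink in $V$ is close to circular unless you work with one-sided derivatives or viscosity-solution machinery. Second, and more substantively, the hybrid minimum principle for state-dependent (autonomous) switching gives $\lambda(\tau_-)=\lambda(\tau_+)+\pi\,\nabla m(x(\tau))$ with a scalar multiplier $\pi$ that can vanish, and it simultaneously guarantees \emph{continuity} of the Hamiltonian across the interface --- which is precisely the ``coupled identity'' you claim cannot be sustained. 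So the two one-sided HJB relations are in fact compatible whenever $\pi=0$, and non-differentiability of $V$ at the interface is not a theorem in this generality; it holds only generically. You recognize this yourself in your final paragraph, but the proposed fix (declaring the exceptional configurations to be ``ghost'' switches) is not a proof. An honest version of the theorem would either restrict the claim to continuity, or add a transversality/nondegeneracy hypothesis under which $\pi\neq 0$ and the kink is forced.
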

\begin{proof}
Based on the dynamic programming formula,
\begin{align*}
V(x(\tau_-),\tau_-)&=V(x(\tau_+),\tau_+)+\int_{\tau_-}^{\tau_+} L(x,u) \mathrm{d}t \\
& \approx V(x(\tau_+),\tau_+)+ \Tilde{L}(x,u)(\tau_+-\tau_-)
\end{align*}
where $\Tilde{L}(x,u)$ is a bounded constant value.
This makes $V(x(\tau_-),\tau_-)=V(x(\tau_+),\tau_+)$ as $\tau_- \rightarrow \tau_+$.
\end{proof}
Recall the definition of Hamiltonian
\begin{align}
H:=\inf_{u \in \mathcal{U}}\; (\lambda f(x,u) + L(x,u))
\end{align}
where $\lambda$ is called co-state.
In the following, we will give two examples where the optimal control inputs are respectively non-differentiable (but continuous) and discontinuous.
\textbf{Example 1: non-differentiable optimal controller $u$:} The following is a simple example to show the nondifferentiability of the optimal controller for a hybrid system.
Consider a one-dimensional switched system:
\begin{align*}
\left\{\begin{matrix}
\dot x = 2u, \quad x>1\\
\dot x = u, \quad x <1 \\
\end{matrix}\right. \\
\min J=\frac{1}{2}\int_0^1 (x^2+u^2)\mathrm{d}t \\
x(0)=2.
\end{align*}
Given the fact that the Hamiltonian is continuous at the switching interface and the optimality conditions in \cite{GELACC_Mi}, we can obtain \footnote{To simplify notation, in this article, both $x_{\pm}$ and $x(\tau \pm)$ means the state at the switching interface. Same for the other variables.}
\begin{align*}
\frac{1}{2}(x_{-}^2+u_{-}^2)+2\lambda_{-}u_{-}&=\frac{1}{2}(x_{+}^2+u_{+}^2)+\lambda_{+}u_{+} \\
\frac{1}{2}x_{-}^2-\frac{1}{2}u_{-}^2&=\frac{1}{2}x_{+}^2-\frac{1}{2}u_{+}^2 \\
&\Rightarrow u_{-}^2=u_{+}^2.
\end{align*}
What's more, $x_-=x_+$, $\lambda_{-}=-\frac{1}{2}u_{-}$, $\lambda_{+}=-u_{+}$.
Based on the Euler-Lagrange equation, we know that, before switching, $\dot \lambda(t)=-x(t) \Rightarrow \dot u(t)=2x(t)$ while after switching $\dot \lambda(t)=-x(t) \Rightarrow \dot u(t)=x(t)$.
Combining these facts, we can derive the continuity but non-differentiability at the switching state of the optimal controller in this example.

\textbf{Example 2: discontinuous optimal controller $u$:} Consider the following first-order state-dependent switched system
\begin{align*}
\dot x = \left\{\begin{matrix}
2x+u, \quad x>1\\
-x+u, \quad x<1
\end{matrix}\right.\\
\min J=\frac{1}{2}\int_0^1 (x^2+u^2)\mathrm{d}t\\
x(0)=2.
\end{align*}
In these two examples, using the optimality conditions derived in \cite{GELACC_Mi} and combined with the algorithm proposed in \cite{jumplaw},
we can find the optimal controllers are respectively non-differentiable and discontinuous at the switching state as shown in Fig. \ref{fig:firstOrderSysJump}.
\begin{figure}[!htp]
    \centering
     \subfigure[]{\includegraphics[width=0.23\textwidth]{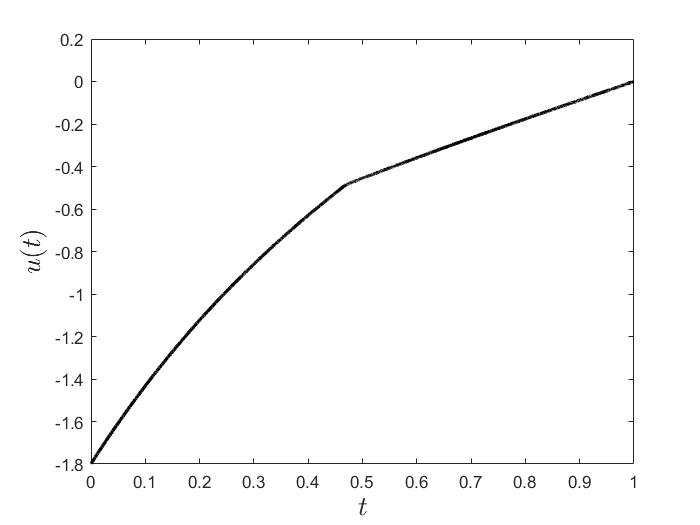}} 
    \subfigure[]{\includegraphics[width=0.23\textwidth]{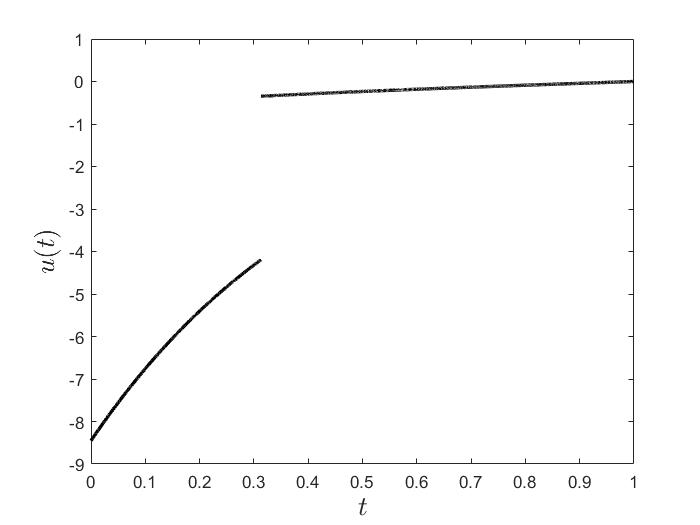}} 
    \caption{(a) Optimal control of Example 1 (switching time $\tau=0.4694$ and optimal cost $J=1.0209$); (b) Example 2 (switching time $\tau=0.3132$ and optimal cost $J=6.5274$).}
    \label{fig:firstOrderSysJump}
\end{figure}

A well-known fact is that the HJB equation \eqref{eqn:HJB} is a partial differential equation, which is notoriously difficult to solve, especially for obtaining the value function.
Thus in this article, we resort to the model-free method called deep deterministic policy gradient in order to solve Eqn. \ref{eqn:HJB}.
In the following, the notation $x$ is the one after augmenting another dimension of time $t$ and $f(x,u) = [\dot x(t),1]^\top$.
\subsection{Approximate dynamic programming and finite horizon RL}
Most recent works about reinforcement learning are about the infinite horizon problem which is related to the linear quadratic regulator in optimal control theory and has a stationary solution.
However, in reality, most optimal control problems have fixed terminal time and we have to optimize an objective function with a fixed terminal time.
This makes the value function a function of state and time, i.e., $V(x,t)$, and thus we define the optimal control policy as $\mu(x,t)$.
The corresponding cost-to-go function can be written as $V^{\mu}(x,t)$.
\begin{theorem}
$u=\mu^*(x,t)$ is the optimal control policy if it minimizes the left-hand-side of the HJB equation \eqref{eqn:HJB} for all $x$ and $t\in [0,t_f]$.
\end{theorem}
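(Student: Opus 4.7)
The plan is to use a standard verification-theorem argument, adapted to handle the non-differentiability of $V^*$ at switching interfaces. Assume $V^*(x,t)$ is a solution of the HJB equation \eqref{eqn:HJB} with terminal condition $V^*(x,t_f)=\psi(x)$, and that $\mu^*(x,t)$ attains the minimum inside the bracket in \eqref{eqn:HJB} pointwise. The goal is to show that, for every admissible control $u(\cdot)$ and every initial pair $(x,t)$, the induced cost satisfies $J(u;x,t)\ge V^*(x,t)$, with equality when $u(s)=\mu^*(x(s),s)$.

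The key computation is to evaluate $V^*$ along a trajectory and integrate. First I would fix any admissible $u(\cdot)$ with trajectory $x(\cdot)$ on $[t,t_f]$ and, on an interval where $V^*$ is $C^1$ in both arguments, apply the chain rule
\begin{align*}
\frac{d}{ds}V^*(x(s),s)=\frac{\partial V^*}{\partial t}+\frac{\partial V^*}{\partial x}f(x(s),u(s)).
\end{align*}
The HJB equation gives $\frac{\partial V^*}{\partial t}+\frac{\partial V^*}{\partial x}f(x,u)+L(x,u)\ge 0$ for every $u$, with equality at $u=\mu^*(x,s)$. Integrating from $t$ to $t_f$ on a $C^1$ piece of the trajectory yields
\begin{align*}
V^*(x(t_f),t_f)-V^*(x(t),t)\ge -\int_t^{t_f}L(x(s),u(s))\,ds,
\end{align*}
and combining with the terminal condition rearranges to $V^*(x,t)\le J(u;x,t)$. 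Substituting $u(s)=\mu^*(x(s),s)$ turns the inequality into an equality, establishing that $\mu^*$ is optimal and that $V^*$ coincides with the optimal cost-to-go.

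The main obstacle, as already flagged in the preceding theorem, is that $V^*$ is only continuous (not differentiable) at the switching interfaces, so the chain-rule step above fails at finitely many instants $\tau_1<\dots<\tau_k$ where the trajectory crosses an interface. To handle this, I would partition $[t,t_f]$ at the switching times, apply the chain-rule/HJB argument on each open subinterval $(\tau_i,\tau_{i+1})$ where $V^*$ is smooth, and then stitch the pieces together using the continuity identity $V^*(x(\tau_i^-),\tau_i^-)=V^*(x(\tau_i^+),\tau_i^+)$ from the previous theorem. The telescoping sum of the one-sided increments of $V^*$ across subintervals collapses to $V^*(x(t_f),t_f)-V^*(x(t),t)$, preserving the inequality and the equality case; the no-Zeno assumption ensures the partition is finite, and absolute continuity of $x(\cdot)$ plus bounded variation of the optimal control make the stage-cost integral well defined across interfaces. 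This completes the verification and, as a by-product, shows $V^\mu$ evaluated at $\mu=\mu^*$ coincides with the solution of \eqref{eqn:HJB}, consistent with the Bellman recursion \eqref{eqn:Bellman}.
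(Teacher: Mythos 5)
The paper states this theorem without any proof at all, so there is nothing to compare your argument against; what you have written is the standard verification-theorem argument, and it is essentially the right way to fill the gap. Your key steps are sound: on each subinterval where the trajectory stays in one region, $V^*$ is $C^1$, the chain rule applies, and the HJB equation gives $\frac{d}{ds}V^*(x(s),s)\ge -L(x(s),u(s))$ with equality at $u=\mu^*$; the continuity of $V^*$ across the interface (the paper's preceding theorem) lets the one-sided increments telescope; and the no-Zeno assumption keeps the partition finite. Two technical caveats remain that you should at least flag if you want the argument to be airtight. First, the equality case requires that the closed-loop system $\dot x=f(x,\mu^*(x,t))$ actually admit an admissible solution; since the paper's own Example 2 shows $\mu^*$ can be discontinuous at the interface, existence of such a trajectory is not automatic and is usually taken as a hypothesis of the verification theorem. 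Second, your partition argument implicitly assumes the trajectory crosses each interface transversally at isolated times; if a trajectory slides along an interface on a set of positive measure, $V^*$ is non-differentiable along a non-negligible portion of the trajectory and the chain-rule step degenerates, which the no-Zeno assumption alone does not exclude. Neither issue is addressed by the paper, so your proposal is, if anything, more careful than the source.
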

\subsection{Approximation theory}
Approximators have been used in different kinds of areas for a long time.
In classical optimal control communities, Chebyshev, Legendre, and Fourier basis functions have been used as approximators for interpolation and a more detailed error-bound analysis can be found in \cite{Cheb_errorBound,errBound2}.
In this section, we analyze the error of the optimal control and the value function $V(x,t)$ given by RL and the real one, i.e., $||u-\hat{u}||$ and $||V(x,t)-\hat{V}(x,t)||$.
We use the approximants $\hat{u}(t)=\theta \phi(x,t)$ and $\hat{V}(x,t)=w\varphi(x,t)$ where $\theta$, and $w$ are the weights and $\phi(\cdot),\; \varphi(\cdot)$ are their corresponding basis function (in this article, it is deep neural networks).
\begin{theorem}[Approximation theory of deep neural network \cite{NNerror}(Theorem 4.16)] \label{thm:NN}
Let $\rho:\mathbb{R}\rightarrow \mathbb{R}$ and $n,m,k\in \mathbb{N}$. Let $\mathcal{NN}_{n,m,k}^\rho$ represent the class of functions $\mathbb{R}^n \rightarrow \mathbb{R}^m$ described by forward neural networks with $n$ neurons in the input layer, $m$ neurons in the output layer, and an arbitrary number of hidden layer, each with $k$ neurons with activation function $\rho$. Every function in the output layer has the identity activation function.
Let $\rho$ be ReLU and $p\in [1,\infty)$. Then $\mathcal{NN}_{n,m,n+m+1}$ is dense \footnote{We say $\rho$ networks are dense in $L^p(\mathcal{X};\mathcal{Y})$ if for any $f^* \in L^p(\mathcal{X}; \mathcal{Y})$ and $\epsilon>0$, there exists a $\rho$ network $f$ such that $||f^*-f||_p \leq \epsilon$.} in $L^p(\mathbb{R}^n;\mathbb{R}^m)$ with respect to the usual $L^p$ norm.
\end{theorem}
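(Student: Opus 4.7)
The plan is to reduce the claim to a constructive approximation of compactly supported continuous functions by bounded-width ReLU networks, and then exhibit such a construction layer by layer. First I would invoke the standard fact that $C_c(\mathbb{R}^n;\mathbb{R}^m)$ is dense in $L^p(\mathbb{R}^n;\mathbb{R}^m)$ for $p\in[1,\infty)$; this reduces the problem to approximating any $f^\star\in C_c$ on a compact cube $K$ containing its support, since the tail outside $K$ contributes zero mass. Because $f^\star$ is uniformly continuous on $K$, a second reduction approximates $f^\star$ in the supremum norm by a piecewise-constant map $\sum_{j=1}^N v_j \, \mathbbm{1}_{Q_j}$, where $\{Q_j\}_{j=1}^N$ is a fine dyadic partition of $K$ and $v_j = f^\star(c_j)\in \mathbb{R}^m$ is the value at the cube center $c_j$.

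Next I would construct a ReLU network of width $n+m+1$ that realizes a function close to $\sum_j v_j \,\mathbbm{1}_{Q_j}$ in the $L^p$ norm. The design principle is to partition the $n+m+1$ neurons of each hidden layer into three persistent registers: $n$ neurons that carry a faithful copy of the input $x$ (implemented by adding a large uniform bias so that every quantity on these channels stays nonnegative, making ReLU act as the identity and subtracting the bias at readout), $m$ neurons that accumulate the running partial sum of the output vector, and one scratch neuron used to build, cube by cube, a continuous piecewise-linear bump $\beta_j(x) \approx \mathbbm{1}_{Q_j}(x)$, whose value is then added to the output register with weight $v_j$ before the scratch channel is overwritten for the next cube. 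Each bump $\beta_j$ is realizable as a minimum of axis-aligned one-dimensional hat functions, and both hats and the binary minimum $\min(a,b) = \tfrac{1}{2}(a+b) - \tfrac{1}{2}\mathrm{ReLU}(a-b) - \tfrac{1}{2}\mathrm{ReLU}(b-a)$ are expressible by a constant number of ReLU units. Since each cube therefore requires only $O(n)$ additional layers while reusing the same $n+m+1$ channels, the whole construction stays within the prescribed width for any $N$.

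The main obstacle is the tightness of the width count: the single extra ``$+1$'' scratch channel is indispensable, since Hanin and Sellke have shown that width $n+m$ is provably insufficient for density in $L^p$. Two pieces of bookkeeping must be done carefully. First, I must verify that the $n$ input channels genuinely pass $x$ through ReLU unchanged and that the global bias shift is correctly removed by the final affine readout; this is a routine but delicate calculation that constrains the intermediate weights and biases so that no latent input coordinate ever becomes negative mid-network. Second, I must control the $L^p$ mass contributed by the transition slabs near the boundaries $\partial Q_j$, where $\beta_j$ interpolates linearly between $0$ and $1$ rather than jumping; by shrinking the hat transition width relative to the cube diameter and refining the partition as needed, this contribution can be made below any prescribed tolerance. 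Combining the uniform approximation error of the piecewise-constant surrogate on $K$ with the controlled transition error and the vanishing tail outside $K$ then yields $\|f^\star - f\|_{L^p} \le \epsilon$, which is the desired density statement.
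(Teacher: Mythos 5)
The paper does not prove this statement at all: it is imported verbatim as Theorem~4.16 of the cited reference \cite{NNerror} (Kidger and Lyons, ``Universal Approximation with Deep Narrow Networks''), so there is no in-paper argument to compare yours against. Judged on its own, your sketch follows the right family of ideas from that literature --- reduce to $C_c(\mathbb{R}^n;\mathbb{R}^m)$ by density in $L^p$, approximate by a piecewise-constant function on a fine partition of a cube, and realize the result with a ``register model'' network that reserves $n$ channels for the input, $m$ for an output accumulator, and the remainder for computation. But the entire content of the theorem is the width bookkeeping, and that is exactly the step you do not carry out. With a single scratch neuron you cannot implement your own recipe: a one-dimensional hat function already requires two or three ReLU units evaluated \emph{in the same layer} (its pre-activation is not affine in the stored channels), and the identity $\min(a,b)=\tfrac12(a+b)-\tfrac12\mathrm{ReLU}(a-b)-\tfrac12\mathrm{ReLU}(b-a)$ needs $a$ and $b$ held simultaneously plus at least two fresh activations, none of which fits in one spare channel while the $n+m$ register channels are locked to identity-passthrough duty. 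The actual proof in \cite{NNerror} resolves precisely this tension with a more delicate encoding, and asserting that ``each cube requires only $O(n)$ additional layers while reusing the same $n+m+1$ channels'' is assuming the conclusion.

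Two further points need repair. First, your justification for why the ``$+1$'' is indispensable is wrong: Hanin and Sellke showed that width $n$ is insufficient and that width $n+m$ \emph{suffices} for uniform approximation on compacta; subsequent work (Park et al.) shows the minimal width for $L^p(\mathbb{R}^n;\mathbb{R}^m)$ density is $\max\{n+1,m\}\le n+m$, so width $n+m$ is not provably insufficient --- the theorem's $n+m+1$ is a convenient sufficient width, not a tight one. Second, for density in $L^p(\mathbb{R}^n;\mathbb{R}^m)$ over the whole of $\mathbb{R}^n$ you must also control the $L^p$ mass of the \emph{network} outside the cube $K$, not just that of $f^\star$; a ReLU network is globally piecewise affine and generically has infinite $L^p$ norm, so you need to argue that your construction is exactly zero (or suitably truncated) off a bounded set, which interacts nontrivially with the large-bias trick you use to pass $x$ through the ReLUs unchanged.
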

With the bounded variation assumption in the control input, we can obtain the convergence of $V(x,t;\theta) \rightarrow V^*(x,t)$ and $u(x,t;\theta) \rightarrow u^*(x,t)$.
In the following, all the norms are $L^p$ norm which is defined as $||f||_p:=(\int_{\mathcal{X}}|f|^p\mathrm{d}x)^{1/p}<\infty,\; p\in [1,\infty)$.
\begin{theorem}
Assume the real value function satisfies the global Lipschitz condition $||V(x(t),t)-V(\hat{x}(t),t)|| \leq \gamma ||x(t)-\hat{x}(t)||$ where $\gamma$ is the Lipschtz constant.
Then there exists a small $\epsilon'>0$ such that
\begin{align}
||V(x(t),t)-\hat{V}(\hat{x}(t),t)|| \leq \epsilon',
\end{align}
where $\hat{V}$ is the value function approximator and $\hat{x}(t)$ is the state with the approximator control input $\hat{u}(t)$ that satisfies the state dynamics Eqn. \eqref{eqn:realsys}.
\end{theorem}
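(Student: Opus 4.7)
The plan is to decompose the target error by the triangle inequality into an ``evaluation error'' and an ``approximation error'', and bound each of them separately. Specifically, I would write
\begin{align*}
\|V(x(t),t)-\hat V(\hat x(t),t)\|
&\leq \|V(x(t),t)-V(\hat x(t),t)\| \\
&\quad + \|V(\hat x(t),t)-\hat V(\hat x(t),t)\|.
\end{align*}
The second term is a pure function-approximation residual and is controlled directly by Theorem \ref{thm:NN}: since $V(\cdot,t)$ is continuous and (by the bounded-variation assumption on the optimal controller together with the continuity result of Theorem~1.1) lies in $L^p$ on the compact reachable set, the ReLU network $\hat V = w\,\varphi$ can be chosen so that $\|V-\hat V\|_p \leq \epsilon_1$ for any prescribed $\epsilon_1>0$. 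The first term is handled by the assumed global Lipschitz property of $V$, giving $\|V(x,t)-V(\hat x,t)\|\leq \gamma\,\|x(t)-\hat x(t)\|$, so the task reduces to bounding the trajectory deviation.

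Next I would bound $\|x(t)-\hat x(t)\|$ using Theorem \ref{thm:NN} once more, this time applied to the controller. By density of neural networks in $L^p$, there exists a parameter $\theta$ such that $\|u-\hat u\|_p = \|u-\theta\phi\|_p \leq \epsilon_2$. Then I would differentiate $x-\hat x$ using \eqref{eqn:realsys}, write
\[
x(t)-\hat x(t) = \int_0^t \bigl(f(x,u)-f(\hat x,\hat u)\bigr)\mathrm{d}s,
\]
and, appealing to local Lipschitz continuity of $f$ in both arguments on each region $\mathcal{R}_i$, invoke Gr\"onwall's inequality to conclude $\|x(t)-\hat x(t)\| \leq C(t_f)\,\epsilon_2$ for some constant $C(t_f)$ depending only on the Lipschitz constants of $f$ and the horizon. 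Combining the two bounds yields $\|V(x,t)-\hat V(\hat x,t)\| \leq \gamma C(t_f)\,\epsilon_2 + \epsilon_1 =: \epsilon'$, which can be made arbitrarily small.

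The main obstacle is the Gr\"onwall step in the presence of state-dependent switching: the two trajectories $x$ and $\hat x$ will in general cross the switching interface at different instants, during which $f(x,u)$ and $f(\hat x,\hat u)$ may belong to different branches $f_i$ and $f_j$. On these ``mismatch'' intervals the increment $\|f(x,u)-f(\hat x,\hat u)\|$ is not controlled by a single Lipschitz constant of one sub-dynamics but by a jump across the interface. I would handle this by using Assumption~3 (no Zeno behavior) to ensure only finitely many switches on $[0,t_f]$, then showing that the measure of the mismatch set is $O(\|x-\hat x\|)$ under the transversality of the interface crossing, so the extra contribution can be absorbed into a modified Gr\"onwall constant. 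With this careful bookkeeping, the two smallness arguments combine cleanly and the theorem follows.
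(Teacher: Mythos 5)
Your proposal is correct and follows essentially the same route as the paper's proof: the same triangle-inequality decomposition into a Lipschitz term and a network-approximation term, the same appeal to Theorem \ref{thm:NN} for both $\hat V$ and $\hat u$, and the same H\"older--Gr\"onwall bound $\|x(t)-\hat x(t)\|\leq M e^{Mt}\|u-\hat u\|_1$ on the trajectory deviation. The one place you go beyond the paper is your treatment of the mismatch intervals where $x$ and $\hat x$ lie in different regions: the paper simply assumes $f_q$ is Lipschitz and applies Gr\"onwall as if the vector field were single-valued, whereas you correctly observe that the switched field is discontinuous across the interfaces and that the no-Zeno and transversality assumptions are needed to absorb the jump contribution into the Gr\"onwall constant---a gap the paper's own argument leaves unaddressed.
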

\begin{proof}
 Using Theorem \ref{thm:NN}, we know that there exists $\epsilon=\max \{\epsilon_1, \epsilon_2\}$ such that $||u(t)-\hat{u}(t)|| \leq \epsilon_1 \leq \epsilon$ and $||V(x(t),t)-\hat{V}(x(t),t)||\leq \epsilon_2 \leq \epsilon$.
 Assume the $f_q(x,u)$ is Lipschitz. Then we would have $x(t)-\hat{x}(t)=x(0)-\hat{x}(0)+\int_0^t f(x(s),u(s))-f(\hat{x}(s),\hat{u}(s))\mathrm{d}s$.
 Since the initial state is the same, we have $||x(t)-\hat{x}(t)|| \leq M || u-\hat{u}||_1+M\int_0^t ||x(s)-\hat{x}(s)||\mathrm{d}s$ by H\"{o}lder's inequality.
 Using the Bellman-Gronwall Lemma, we conclude that $||x(t)-\hat{x}(t) ||\leq M||u(t)-\hat{u}(t)||_1e^{Mt}$.
 Thus, using the triangle inequality, we can obtain
 \begin{align*}
&||V(x,t)-\hat{V}(\hat{x},t)|| \leq \\
&||V(x,t)-V(\hat{x},t) || + ||V(\hat{x},t)-\hat{V}(\hat{x},t)|| \\
&\leq \gamma ||x-\hat{x}||+\epsilon \\
&\leq \gamma \beta ||u-\hat{u}||_1+\epsilon \\
&\leq (\gamma\beta+1)\epsilon\\
&\leq \epsilon',
 \end{align*}
 where $\beta=Me^{Mt}$ and $\epsilon' \geq (\gamma \beta+1)\epsilon$.
 As $\epsilon' \rightarrow 0$, $V(x,t)\rightarrow \hat{V}(x,t)$ in $L^p$.
This completes the proof.
\end{proof}
\section{State-Based Multi-agent Deep deterministic policy gradient (SMADDPG)} \label{sec:DDPG}
In this section, we will first introduce briefly the deep deterministic policy gradient algorithm which is a popular actor-critic RL method used to find an optimal policy by interacting with the environment.
Then we will explain why vanilla DDPG algorithm may fail for state-dependent switched systems.
Finally we introduce our proposed SMADDPG algorithm.

DDPG has four neural networks: an actor neural network for parameterizing policy (i.e., control input in our problem), a critic neural network for parameterizing the Q-function ($V(x_t)= \max_{u_t} Q(x_t,u_t)$), and their respective copied target networks to make learning stabilizable.
The parameters of the actor are updated using the policy gradient method as \cite{DDPG}:
\begin{align} \label{eqn:PG}
\nonumber \nabla_{\theta^{\mu_i}} \mu |_{x_t}=\frac{1}{N}\sum_t \nabla_u Q(x,u|\theta^{Q_i})|_{x=x_t, u=\mu(x_t)}\\ \nabla_{\theta ^{\mu_i}}\mu(x|\theta^{\mu_i})|_{x_t},.   
\end{align}
Based on Theorem \ref{thm: hmp}, we know that for systems with regional dynamics, the optimal controller is usually not continuous or differentiable.
Thus the policy gradient estimation step in Eqn. \eqref{eqn:PG} is highly biased \cite{Tedrake}.

To solve this problem, we propose algorithm SMADDPG which uses multiple regional agents to learn the optimal controller for such kind of systems.
Algorithm \ref{alg:SMADDPG} shows the logic flow of the SMADDPG algorithm.
The idea is to put an agent at each region as shown in Fig. \ref{fig:scenario} and make them learn in this environment \footnote{Please note that in this article, we mix the usage of the word ``environment'' and ``system''}.
To simplify the notation, the $x$ also includes the time variable $t$.
The $\mu(x|\theta^\mu)$ is the parameterized optimal control policy and $Q(x_t,u_t)$ is parameterized with parameters $\theta^Q$ and learned using the HJB equation \eqref{eqn:HJB}.
At each episode, we run the environment $T=\frac{t_f}{dt}$ steps where $dt$ is the step size. 
If the current state is in region $i$, we activate the agent $i$ and update its parameters.

\begin{algorithm}
\caption{SMADDPG.}\label{alg:SMADDPG}
\SetKwFunction{Agent}{Agent}
\SetKwFunction{SMADDPG}{SMADDPG}
\Fn{\Agent{state\_dim, action\_dim, current\_state}}{
Select action $u_t=\mu(x_t|\theta^{\mu_i})+\mathcal{N}_t$ according to the current policy and exploration noise\;
Sample a random minibatch of $N$ transitions from $RB$\;
Set $y_t=r_t+ Q'(x_{t+1},\mu'(x_{t+1}|\theta^{\mu'_i})|\theta^{Q_i})$\;
Update critic by minimizing the loss $L=\frac{1}{N}\sum_t (y_t-Q(x_t,u_t|\theta^{Q_i}))^2$\;
Update the actor policy using the sampled gradient based on Eqn. \ref{eqn:PG}
\;
Update the target networks by having them slowly track the learned networks with coefficient $\tau  \ll 1$:
$\theta^{Q_i'}\leftarrow \tau \theta^{Q_i}+(1-\tau)\theta^{Q_i'}$\;
$\theta^{\mu_i'}\leftarrow \tau \theta^{\mu_i}+(1-\tau)\theta^{\mu_i'}$\;
}
\Fn{\SMADDPG}{
Randomly initialize $m$ simple agent with critic networks $Q(x,u|\theta^{Q_i})$ and its corresponding actor networks $\mu(x|\theta^{\mu_i})$ with weights $\theta^{Q_i}$ and $\theta^{\mu_i}$ \;
Initialize target network $Q'$ and $\mu'$ with weights $\theta^{Q'}\leftarrow \theta^Q$, $\theta^{\mu'}\leftarrow \theta^\mu$ where we denote $\theta^\mu=\mathrm{Vec}(\theta^{\mu_i}), i=1,\cdots m$\ and $\theta^Q=\mathrm{Vec}(\theta^{Q_i}), i=1,\cdots m$\;
Initialize $m$ replay buffer $RB_i$ \;
Initialize a random process $\mathcal{N}$ for action exploration\;
Receive initial observation state $x_0$\;
\For{$episode=1,2, \cdots$}{
\For{$t=0,\cdots T-1$}{
\If{$x_t \in \mathcal{R}_i$ }{
$u_t=\mathrm{Agent}_i .\mathrm{act}(x_t)$ \;
 $x_{t+1}, \; r_t = env.step(u_t)$\;
 $\Agent_i.\mathrm{RB}.add(x_t, u_t,x_{t+1},done)$\;
$\Agent_i.\mathrm{update}()$\;
}
}
}
}
\end{algorithm}

In the following section, we will compare our proposed SMADDPG algorithm with vanilla DDPG.
The implementation of the vanilla DDPG algorithm can be found in \cite{DDPG}.
\section{Illustrative examples}\label{sec:simulation}
We test the proposed algorithm in two examples, i.e., a first-order switched system~(FO) and a multiple region switched system~(MR).
The environment is created using first-order Euler method $x_{t+1}=x_t+f(x_t,u_t)dt$. We let $dt=0.01$.
The structure of the DDPG network is as follows: the actor network of DDPG has $a_l$ hidden layers with ReLU activation function and the \texttt{Tanh} output activation and the critic network has $c_l$ hidden layers with \texttt{ReLu} activation functions and \texttt{ReLu} output activation function.
Each example using DDPG is run with 10 different randomly generated seeds under 1000 episodes and the average performance is calculated (written as mean $\pm$ 1 standard deviation).
The hyperparameters are fine tuned to obtain the best results for each algorithm using tensorboard as shown in Table \ref{tab:hyperparams}.
All the simulations are run on a server with GPUs and Pytorch.

\begin{table}[!htp]
    \centering
    \begin{tabular}{c|c|c|c} \\ 
        Environment & Hyper-parameters &  SMADDPG&VDDPG\\ \hline \hline
        \multirow{6}{*}{FO}  & Actor architecture & (32,32,32) & (256,256)\\
        & Critic architecture & (32,32) &   (256,256) \\
        & Batch size & 64& 64 \\
        & actor learning rate & $1e-4$ &$1e-4$  \\
        &critic learning rate & $2e-4$ &$2e-4$ \\
        & noise & 0.05 &0.05\\ \hline
        \multirow{6}{*}{MR}  & Actor architecture & (256,256) & (256,256,256)\\
        & Critic architecture & (256,256) &  (256,256)\\
        & Batch size &  64& 64 \\
        & actor learning rate & $1e-4$  & $1e-4$\\
        &critic learning rate & $2e-4$  & $2e-4$\\
        & noise & 0.05 &0.05\\ \hline
    \end{tabular}
    \vspace{4pt}
    \caption{Best hyperparameters for different environments.}
    \label{tab:hyperparams}
\end{table}
\subsection{Example 1: FO}
The first example is a first-order state-dependent switched system. 
\begin{align*}
\min J=\frac{1}{2}\int_0^1 (x^2+u^2)\mathrm{d}t\\
s.t., \; \dot x = \left\{\begin{matrix}
2x+u, \quad x>1\\
-x+u, \quad x<1
\end{matrix}\right.\\
x(0)=2.
\end{align*}
where the fixed terminal time $t_f=1$.
The system's dynamic switches at the state $x=1$.
The reward function is designed as $r=-0.5(x_t^2+u_t^2)$.

Fig. \ref{fig:FOcompare}(a) shows the reward curve with fixed seed and fine-tuned hyper-parameters.
As we can see, SMADDPG has a higher reward and the control policy (Fig. \ref{fig:FOcompare}(b)) learned to switch at some time instant $t\in[0.2,0.4]$ which is consistent with Fig.~\ref{fig:firstOrderSysJump}(b).
The average performance of SMADDPG is better than that of vanila DDPG as well with lower variance during the learning process as shown in Fig. \ref{fig:FOcompare}(c).
\begin{figure*}[!htp]
    \centering
     \subfigure[]{\includegraphics[width=0.32\textwidth]{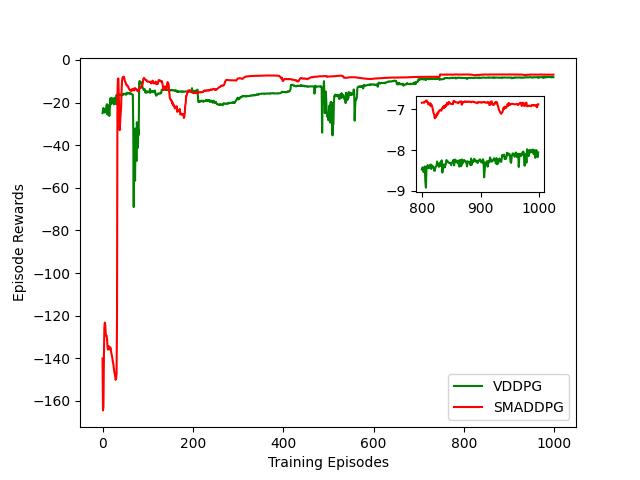}} 
    \subfigure[]{\includegraphics[width=0.32\textwidth]{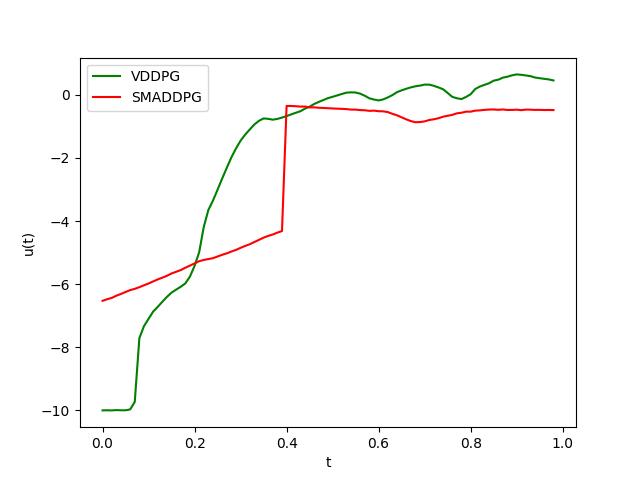}} 
    \subfigure[]{\includegraphics[width=0.32\textwidth]{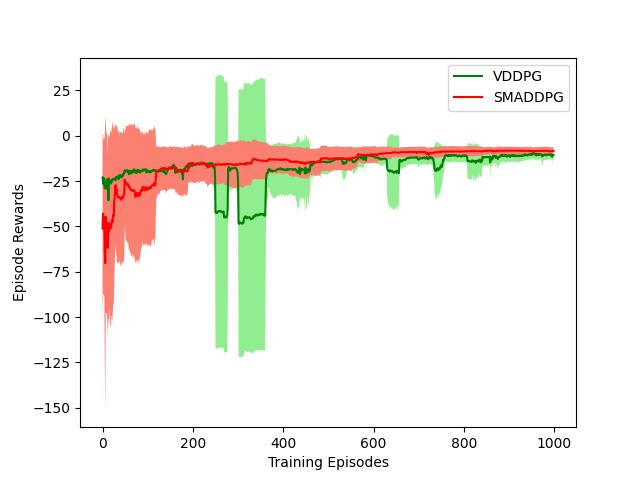}}
    \caption{\textbf{FO}: (a) Episode reward (fixed seed); (b) Learned control policy (fixed seed); (c) Average episodic reward under 10 runs with randomly generated seed.}
    \label{fig:FOcompare}
\end{figure*}

\subsection{Example 2: MR}
Next, we consider the following example adapted from \cite{MR}. This case also satisfies our assumptions:
\begin{align*}
\dot x = A_q x +B_q u
\end{align*}
with
\begin{align*}
&A_1 = \begin{bmatrix}
-1 & 2 \\
-2 & -1 \\
\end{bmatrix}, A_2 = \begin{bmatrix}
-1 & -2 \\
1 & -5 \\
\end{bmatrix}, A_3=\begin{bmatrix}
-0.5 & -5 \\
1 & -0.5 \\
\end{bmatrix}, \\&
A_4 = \begin{bmatrix}
-1 & 0 \\
2 & -1 \\
\end{bmatrix}, B=\begin{bmatrix}
1 \\
1
\end{bmatrix}.    
\end{align*}
The stage cost function is $L_i = \frac{1}{2}(x^\top x+ u^2),\; i=1,2,3,4$, and the terminal time is $t_f = 2$. 
The initial state is $x_0 = (-8,-6)^\top$.
Separating regions $\mathcal{R}_i, i=1,2,3,4$, the switching interfaces are $m_{12} = x_2+5 = 0$, $m_{13} = x_1+5=0$, $m_{23}=-m_{32}=x_1-x_2=0$, $m_{24}=-m_{42}=x_1+2=0$, $m_{34}=-m_{43}=x_2+2=0$.
The controller is constrained in all locations to $-10\leq u_q \leq 10$.

For the DDPG algorithm, we set up the environment as a state-dependent switched system.
The observation space is the position $(x_1,x_2,t)$ where $x_1\in [-10,10],\; x_2\in [-10,10], \; t\in [0,2]$.
The action space is the control input $u\in [-10,10]$.
The reward function is defined as $r_t=-L_i(x_t,u_t)$.

Fig. \ref{fig:MRcompare}(a)-(b) are the episode reward curve and corresponding learned control policy after convergence obtained by SMADDPG and VDDPG respectively after fine tuning the hyper-parameters.
It shows a higher reward obtained and more stable convergence result.
Fig. \ref{fig:MRcompare}(c) is the episode reward under 10 trials with different random seed.
Lower variance is observed for SMADDPG.
\begin{figure*}[!htp]
    \centering
     \subfigure[]{\includegraphics[width=0.32\textwidth]{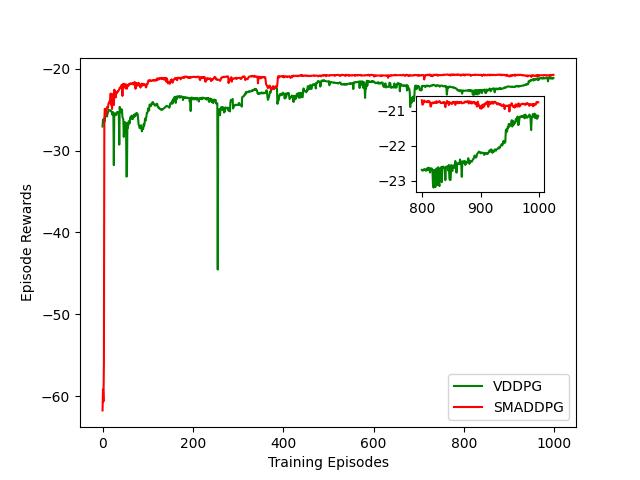}} 
    \subfigure[]{\includegraphics[width=0.32\textwidth]{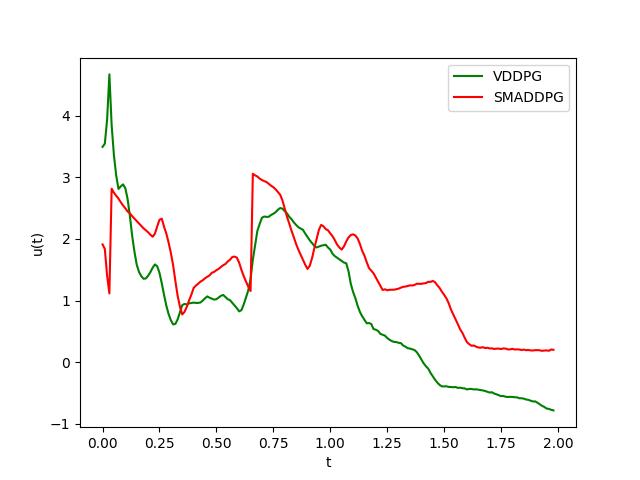}} 
    \subfigure[]{\includegraphics[width=0.32\textwidth]{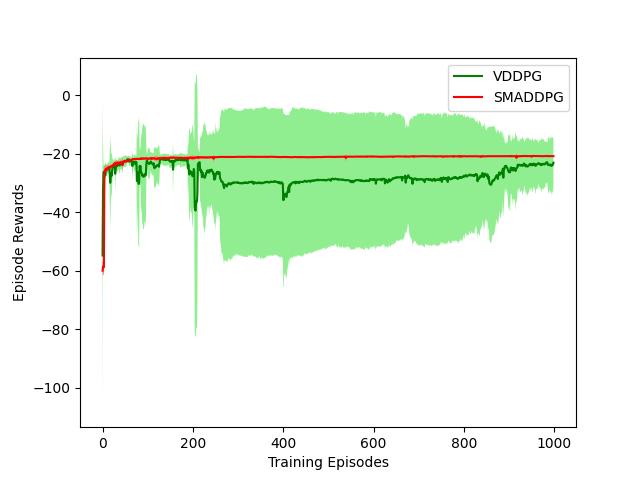}}
    \caption{\textbf{MR}: (a) Episode reward (fixed seed); (b) Learned control policy (fixed seed); (c) Average episodic reward under 10 runs with randomly generated seed.}
    \label{fig:MRcompare}
\end{figure*}

\section{Conclusion} \label{sec:conclusion}
In this article, we proposed to use multiple DDPG agents to solve optimal control for state-dependent switched systems.
We theoretically analyzed the feasibility of the proposed idea and verified it in two customized state-dependent switched systems.
We observed an improved performance of proposed method compared to vanilla DDPG.
The present approach focuses on the case where the switching interface is known while the actual system dynamics itself remains unknown. This simplifying assumption allowed us to demonstrate and solve the hidden problems of learning optimal control policy for hybrid systems. 
Our future work will address the case where the switching interface is treated as unknown along with optimal control.

\bibliographystyle{IEEEtran}
\bibliography{IEEEabrv,IEEEexample}

\end{document}